\documentclass[11pt, onecolumn]{article}
\usepackage[top=1in, bottom=1in, left=1.25in, right=1.25in]{geometry}
\usepackage{amsmath, amssymb, amsthm}
\usepackage{braket}
\usepackage{bm}
\usepackage{booktabs}
\usepackage{graphicx}
\usepackage{hyperref}

\newtheorem{theorem}{Theorem}
\newtheorem{lemma}{Lemma}

\newtheorem{definition}{Definition}
\newtheorem{remark}{Remark}

\title{Quantum Ergodicity and Thermalization in Interval Quantum Mechanics}
\author{Abbas Edalat\\ Imperial College London, UK\\ \texttt{a.edalat@imperial.ac.uk}}
\date{May 30, 2026}

\begin{document}

\maketitle

\begin{abstract}
We combine Reimann's spectral typicality theorem---a modern formulation of quantum ergodicity---with the framework of Interval Quantum Mechanics (IQM). In IQM, quantum states are represented not by points but by \emph{quantum parcels}: weak open convex sets of density matrices defined by finitely many expectation intervals. Such parcels are the exact mathematical representation of the epistemic knowledge obtained from finite-precision measurements of macroscopic observables.

We prove that for a single parcel in which every state has large effective dimension (a condition that ensures thermalization), the expectation interval of any bounded observable becomes concentrated around the microcanonical value for most late times. The asymptotic bound depends only on the minimal effective dimension within the parcel, not on its detailed shape.

For a double parcel \((O_1,O_2)\) with both components contained in an energy shell, separated by a conserved quantity \(Q^*\) that is supported on the range of the measurement projector, we show that the expectation intervals of both parcels become concentrated near the microcanonical values of bounded observables, the separation is preserved exactly, and the updated double parcel after a fuzzy measurement remains valid.
\end{abstract}

\section{Introduction}

\subsection{Quantum ergodicity and thermalization}

The problem of how isolated quantum systems approach thermal equilibrium has a long history. Early work by von Neumann \cite{vonNeumann1955} introduced the concept of a \emph{quantum ergodic theorem}, showing that for a typical initial state in a macro-scope (a high-dimensional subspace defined by macroscopic observables), the time-averaged expectation values of observables are close to their microcanonical values. Von Neumann already stressed that a complete description of a macroscopic system is neither possible nor necessary---a key insight that directly motivates the IQM framework.

Goldstein, Lebowitz, Tumulka, and Zanghi \cite{GLTZ2010} revisited von Neumann's original argument, clarified its content, and introduced the concept of \emph{normal typicality}, showing that for a typical Hamiltonian, every initial state in the energy shell evolves so that for most times its macroscopic properties are close to the microcanonical ensemble. More recently, Reimann \cite{Reimann2015} proved a generalization that removes many of von Neumann's restrictive assumptions. Reimann's theorem shows that for any initial density matrix
\(\rho_0\) supported in an energy shell,
\[
\left\langle
\bigl|
\operatorname{Tr}(\rho_0(t)A)
-
\operatorname{Tr}(\rho_{\mathrm{mc}}A)
\bigr|^2
\right\rangle_t
\le
\frac{C(A)^2}{d_{\mathrm{eff}}(\rho_0)},
\]
where \(A\) is a bounded observable,
\(C(A)=C_0\|A\|\),
\(C_0\) is a constant depending only on the Hamiltonian,
and \(\{|n\rangle\}\) denotes the energy eigenbasis,
\(H|n\rangle=E_n|n\rangle\).
The effective dimension of \(\rho_0\) is
\[
d_{\mathrm{eff}}(\rho_0)
=
\frac{1}{\max_n \langle n|\rho_0|n\rangle}.
\]

States with large effective dimension are spread over many energy
eigenstates and therefore satisfy a strong thermalization bound. By
contrast, if \(\rho_0=|n\rangle\langle n|\) is an energy eigenstate,
then \(d_{\mathrm{eff}}(\rho_0)=1\), so the bound reduces to a trivial
estimate. Such states are stationary and therefore lie outside the
thermalizing regime captured by Reimann's theorem.

\subsection{Interval Quantum Mechanics (IQM)}

In standard quantum mechanics, the state of a system is represented by a single density matrix---a \emph{point state}. This idealisation assumes infinite precision. In any real experiment, however, measurements have finite resolution and we only have access to finitely many expectation values, each known within an open interval. Consequently, the physically meaningful state is not a point but the set of all density matrices compatible with those finite-precision data. This set is a \emph{quantum parcel}: a weak open convex subset of the state space \(\mathcal{D}(\mathcal{H})\). The basic parcels are \emph{experimental parcels} defined by finitely many strict linear inequalities:
\[
O = \{\rho\in\mathcal{D}(\mathcal{H}) : a_j < \operatorname{Tr}(\rho H_j) < b_j,\; j=1,\dots,m\},
\]
where \(H_1,\dots,H_m\) are bounded observables (typically macroscopic) and the intervals \((a_j,b_j)\) reflect the finite resolution. The IQM framework is developed in full detail in a companion paper \cite{Edalat2026}.

\subsection{This paper}

We combine Reimann's ergodicity theorem with IQM to study thermalization of quantum parcels. The key insight is that thermalization is a property of states with large effective dimension. Therefore we restrict attention to parcels that consist entirely of such states. For a parcel \(O\), define
\[
d_{\mathrm{eff}}(O) = \inf_{\rho\in O} d_{\mathrm{eff}}(\rho).
\]
Our results will be formulated under the hypothesis that
\[
d_{\mathrm{eff}}(O)\ge D
\]
for some sufficiently large constant \(D\). This assumption expresses
the requirement that every state in the parcel lies in the thermalizing
regime described by Reimann's theorem.
\subsection{Remark on the effective-dimension hypothesis}

The condition \(d_{\mathrm{eff}}(O)\ge D\gg1\) is an additional physical
hypothesis. It does not follow automatically from the definition of an
experimental parcel, since a parcel defined only by macroscopic
expectation intervals may contain states arbitrarily close to energy
eigenstates, for which \(d_{\mathrm{eff}}=1\). In this paper we
therefore restrict attention to parcels whose constituent states all
have sufficiently large effective dimension, corresponding to systems
known to be in a thermalizing regime.

\section{Preliminaries}

Let \(\mathcal H\) be a finite-dimensional Hilbert space,
\(H\) the system Hamiltonian with eigenvalues
\(\{E_n\}\) and corresponding orthonormal energy eigenstates
\(\{|n\rangle\}\), so that
\[
H|n\rangle = E_n |n\rangle .
\] Fix an energy \(E\) and a window \(\Delta>0\) and define the \emph{energy shell}
\[
\mathcal{H}_{\Delta} = \operatorname{span}\bigl\{|n\rangle : E_n\in[E-\Delta, E+\Delta]\bigr\},\qquad d_{\Delta} = \dim\mathcal{H}_{\Delta}.
\]
The microcanonical state is \(\rho_{\mathrm{mc}} = I_{\Delta} / d_{\Delta}\).

For any density matrix \(\rho\), its effective dimension is
\[
d_{\mathrm{eff}}(\rho) = \frac{1}{\max_{n\in\Delta} \Braket{n|\rho|n}}.
\]
States that are spread out over many energy eigenstates have large
effective dimension and typically satisfy strong thermalization bounds.
Energy eigenstates have \(d_{\mathrm{eff}}=1\); they are stationary and
therefore lie outside the thermalizing regime considered here.

\begin{definition}[Effective dimension of a parcel]
For a parcel \(O\subset\mathcal{D}(\mathcal{H})\), define
\[
d_{\mathrm{eff}}(O) = \inf_{\rho\in O} d_{\mathrm{eff}}(\rho).
\]
\end{definition}

\begin{remark}
The infimum may not be attained, but for parcels that are bounded away from the boundary of the state space (i.e., that do not approach any pure eigenstate too closely), \(d_{\mathrm{eff}}(O)\) will be large.
\end{remark}

\section{Thermalization of a single parcel}

\begin{lemma}[Uniform typicality over a parcel]
\label{lem:uniform}
Let \(O \subset \mathcal D(\mathcal H_\Delta)\) be a parcel with \(d_{\mathrm{eff}}(O)\ge D\). For any \(\varepsilon>0\) and any bounded observable \(A\), there exists a set \(\mathcal{G}\subset[0,\infty)\) (depending on \(O\) and \(\varepsilon\)) such that
\[
\limsup_{T\to\infty}\frac{1}{T}\,\lambda(\mathcal{G}^c\cap[0,T]) \le \min\left\{1,\; \frac{4N(\varepsilon) C(A)^2}{\varepsilon^2 D}\right\}
\]
and for all \(t\in\mathcal{G}\),
\[
\sup_{\rho\in O} \bigl|\operatorname{Tr}(\rho(t)A)-\operatorname{Tr}(\rho_{\mathrm{mc}}A)\bigr| \le \varepsilon.
\]
Here \(\lambda\) denotes Lebesgue measure on \([0,\infty)\),
\(\mathcal G^c=[0,\infty)\setminus\mathcal G\) denotes the complement
of \(\mathcal G\), and \(N(\varepsilon)\) is the number of centres in a
finite \(\varepsilon/(2\|A\|)\)-net covering the compact closure
\(\overline O\).
\end{lemma}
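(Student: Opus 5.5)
The plan is a finite-net argument combined with Reimann's bound and Chebyshev's (Markov's) inequality applied to time averages, finished by a union bound over the net centres.

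\textbf{Net.} Since \(\mathcal H\) is finite-dimensional, \(\overline O\) is a compact subset of \(\mathcal D(\mathcal H_\Delta)\) in trace norm. Choose centres \(\sigma_1,\dots,\sigma_{N}\), with \(N=N(\varepsilon)\), such that every \(\rho\in\overline O\) satisfies \(\|\rho-\sigma_k\|_1\le\varepsilon/(2\|A\|)\) for some \(k\). I would take the centres in \(O\) itself; this is always possible by halving the radius and reselecting, which changes \(N(\varepsilon)\) only by a relabelling of the net. Then \(d_{\mathrm{eff}}(\sigma_k)\ge D\) for every \(k\). This choice is the point to watch: if the centres lay on \(\partial O\), the effective-dimension bound would still hold by continuity of \(\rho\mapsto\max_n\langle n|\rho|n\rangle\), since the infimum over \(O\) passes to the closure.

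\textbf{Transfer from the net to all of \(O\).} Unitary evolution preserves trace distance, and \(|\operatorname{Tr}(XA)|\le\|X\|_1\|A\|\). Hence for every \(t\),
\[
|\operatorname{Tr}(\rho(t)A)-\operatorname{Tr}(\sigma_k(t)A)|\le \varepsilon/2 .
\]
It therefore suffices to find times at which all centres satisfy \(|\operatorname{Tr}(\sigma_k(t)A)-\operatorname{Tr}(\rho_{\mathrm{mc}}A)|\le\varepsilon/2\). The triangle inequality then gives the bound \(\varepsilon\) uniformly over \(\rho\in O\).

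\textbf{Definition of \(\mathcal G\).} Let
\[
B_k=\{t\ge0:\ |\operatorname{Tr}(\sigma_k(t)A)-\operatorname{Tr}(\rho_{\mathrm{mc}}A)|>\varepsilon/2\}, \qquad \mathcal G=[0,\infty)\setminus\bigcup_k B_k .
\]
Each \(B_k\) is open because the map is continuous in \(t\), so it is measurable. Markov's inequality applied to the finite-time average gives
\[
\frac1T\lambda(B_k\cap[0,T])\le \frac{4}{\varepsilon^2}\,\frac1T\int_0^T\bigl|\operatorname{Tr}(\sigma_k(t)A)-\operatorname{Tr}(\rho_{\mathrm{mc}}A)\bigr|^2dt .
\]
Taking \(\limsup_{T\to\infty}\) and invoking Reimann's bound, read with \(\langle\cdot\rangle_t\) as the infinite-time average, yields \(\limsup_{T\to\infty}\frac1T\lambda(B_k\cap[0,T])\le 4C(A)^2/(\varepsilon^2 D)\). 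A union bound over the \(N(\varepsilon)\) centres, using subadditivity of \(\limsup\), gives the constant \(4N(\varepsilon)C(A)^2/(\varepsilon^2D)\). The trivial bound \(1\) holds because \(\lambda(\mathcal G^c\cap[0,T])\le T\), which gives the minimum in the statement.

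\textbf{Main obstacle.} The delicate step is interpreting Reimann's time average as a \(\limsup\) of finite-\(T\) averages. In finite dimensions the integrand is a trigonometric polynomial, so the limit exists, and I would note this explicitly. A second point needing care is that the net centres should inherit \(d_{\mathrm{eff}}\ge D\), which is handled as described in the net step.
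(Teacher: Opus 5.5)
Your proposal is correct and follows the paper's own proof essentially step for step. Both use a finite trace-norm net of radius \(\varepsilon/(2\|A\|)\), invariance of trace distance under unitary evolution, Reimann's bound plus Chebyshev for each centre, and a union bound over the bad time sets. Your explicit check that the centres can be taken in \(O\) fills a point the paper passes over silently: it guarantees that each centre has \(d_{\mathrm{eff}}\ge D\) and is supported in the shell, as Reimann's theorem requires. No halving of the radius is needed for this, and the claim that halving ``changes \(N(\varepsilon)\) only by a relabelling'' is not accurate. Instead, compactness of \(\overline O\) directly gives a finite subcover by open balls of radius \(\varepsilon/(2\|A\|)\) centred at points of \(O\).
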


\begin{proof}
Since \(\dim\mathcal{H}<\infty\), \(\overline{O}\) is compact. Cover it by finitely many trace-norm balls of radius \(\delta = \varepsilon/(2\|A\|)\). Let \(\{\rho_1,\dots,\rho_{N(\varepsilon)}\}\) be the centres. For any \(\rho\in O\) there exists \(j\) with \(\|\rho-\rho_j\|_1<\delta\). Unitary evolution preserves the trace norm, so \(\|\rho(t)-\rho_j(t)\|_1 = \|\rho-\rho_j\|_1 < \delta\) for all \(t\).

For any \(t\),
\[
|\operatorname{Tr}(\rho(t)A)-\operatorname{Tr}(\rho_{\mathrm{mc}}A)|
\le |\operatorname{Tr}(\rho_j(t)A)-\operatorname{Tr}(\rho_{\mathrm{mc}}A)| + \frac{\varepsilon}{2}.
\]

For each centre \(\rho_j\), Reimann's theorem together with Chebyshev's inequality gives that the set of times where the deviation exceeds \(\varepsilon/2\) has long-time upper density at most
\[
\frac{4C(A)^2}{\varepsilon^2 d_{\mathrm{eff}}(\rho_j)} \le \frac{4C(A)^2}{\varepsilon^2 D}.
\]

Define \(\mathcal{B}_j = \{ t : |\operatorname{Tr}(\rho_j(t)A)-\operatorname{Tr}(\rho_{\mathrm{mc}}A)| > \varepsilon/2 \}\). Then
\[
\limsup_{T\to\infty}\frac{1}{T}\,\lambda(\mathcal{B}_j\cap[0,T]) \le \frac{4C(A)^2}{\varepsilon^2 D}.
\]

Set \(\mathcal{G} = \bigcap_{j=1}^{N(\varepsilon)} \mathcal{B}_j^c\). Then \(\mathcal{G}^c = \bigcup_{j=1}^{N(\varepsilon)} \mathcal{B}_j\), so by subadditivity of the limsup,
\[
\limsup_{T\to\infty}\frac{1}{T}\,\lambda(\mathcal{G}^c\cap[0,T]) \le \sum_{j=1}^{N(\varepsilon)} \frac{4C(A)^2}{\varepsilon^2 D}
= \frac{4N(\varepsilon) C(A)^2}{\varepsilon^2 D}.
\]
Since an upper density cannot exceed \(1\), we take the minimum with \(1\).

Now take any \(t\in\mathcal{G}\). For any \(\rho\in O\), choose the centre \(\rho_j\) with \(\|\rho-\rho_j\|_1<\delta\). Because \(t\notin\mathcal{B}_j\), we have \(|\operatorname{Tr}(\rho_j(t)A)-\operatorname{Tr}(\rho_{\mathrm{mc}}A)|\le \varepsilon/2\). Hence
\[
|\operatorname{Tr}(\rho(t)A)-\operatorname{Tr}(\rho_{\mathrm{mc}}A)|
\le \frac{\varepsilon}{2} + \frac{\varepsilon}{2} = \varepsilon.
\]
Since this holds for all \(\rho\in O\), the supremum is also bounded by \(\varepsilon\). 
\end{proof}

\begin{theorem}[Parcel thermalization under a uniform effective-dimension bound]
\label{thm:basic}
Let $O(0) \subset \mathcal D(\mathcal H_\Delta)$ be a parcel with \(d_{\mathrm{eff}}(O(0))\ge D\). For any fixed target precision \(\varepsilon>0\), let \(N(\varepsilon)\) be the finite covering number of \(\overline{O(0)}\) by trace-norm balls of radius \(\varepsilon/(2\|A\|)\). There exists a set \(\mathcal{G}\subset[0,\infty)\) such that
\[
\limsup_{T\to\infty}\frac{1}{T}\,\lambda(\mathcal{G}^c\cap[0,T]) \le \min\left\{1,\; \frac{4N(\varepsilon)C(A)^2}{\varepsilon^2 D}\right\}
\]
and for all \(t\in\mathcal{G}\) the following hold:

\begin{enumerate}
    \item \textbf{Thermalization of the possible set:} For any bounded observable \(A\), the expectation interval satisfies
    \[
    \operatorname{width}\bigl(E_{O(t)}(A)\bigr) \le 2\varepsilon, \qquad
    \bigl|\operatorname{centre}\bigl(E_{O(t)}(A)\bigr)-\operatorname{Tr}(\rho_{\mathrm{mc}}A)\bigr| \le \varepsilon.
    \]
    \item \textbf{Preservation of constants of motion:} If \([A,H]=0\), then \(E_{O(t)}(A)=E_{O(0)}(A)\) for all \(t\).
    \item \textbf{Uniformity of the bound:} For a fixed \(\varepsilon\), the asymptotic precision constraints in (1) depend only on \(\varepsilon\), not on the detailed internal shape of \(O(0)\).
\end{enumerate}
\end{theorem}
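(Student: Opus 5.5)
The plan is to obtain all three parts from Lemma~\ref{lem:uniform}, applied to the initial parcel \(O=O(0)\). Throughout I use \(O(t)=\{U_t\rho U_t^\dagger:\rho\in O(0)\}\) with \(U_t=e^{-iHt}\), and \(E_{O(t)}(A)=\{\operatorname{Tr}(\sigma A):\sigma\in O(t)\}\). Since \(O(t)\) is convex and \(\sigma\mapsto\operatorname{Tr}(\sigma A)\) is linear, this set is an interval. Its endpoints are \(\inf\) and \(\sup\) over \(\rho\in O(0)\) of \(\operatorname{Tr}(\rho(t)A)\).

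\textbf{Step 1 (set of good times).} Fix \(A\) and \(\varepsilon\). Lemma~\ref{lem:uniform} gives a set \(\mathcal G\) whose complement has the required upper density, with \(N(\varepsilon)\) the covering number of \(\overline{O(0)}\) at radius \(\varepsilon/(2\|A\|)\). On \(\mathcal G\) we have \(\sup_{\rho\in O(0)}|\operatorname{Tr}(\rho(t)A)-\operatorname{Tr}(\rho_{\mathrm{mc}}A)|\le\varepsilon\).

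\textbf{Step 2 (width and centre).} For \(t\in\mathcal G\), Step 1 places \(E_{O(t)}(A)\) inside \([m-\varepsilon,m+\varepsilon]\), where \(m=\operatorname{Tr}(\rho_{\mathrm{mc}}A)\). An interval contained in \([m-\varepsilon,m+\varepsilon]\) has width at most \(2\varepsilon\), and its centre lies in the same interval, so the centre is within \(\varepsilon\) of \(m\). This proves part (1).

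The main subtlety is the phrase ``for any bounded observable \(A\)'': \(\mathcal G\) and \(N(\varepsilon)\) depend on \(A\) through \(\|A\|\) and \(C(A)\). I would state explicitly that \(\mathcal G=\mathcal G_{A,\varepsilon}\) is chosen after \(A\), as the displayed bound already indicates. For finitely many observables one can intersect the corresponding sets and add the densities by the same subadditivity argument as in the lemma. A single \(\mathcal G\) valid for all \(A\) at once cannot be obtained this way, so I would not claim it.

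\textbf{Step 3 (constants of motion).} Suppose \([A,H]=0\). Then \(U_t^\dagger AU_t=A\), and cyclicity of the trace gives \(\operatorname{Tr}(\rho(t)A)=\operatorname{Tr}(\rho\,U_t^\dagger AU_t)=\operatorname{Tr}(\rho A)\) for every \(\rho\). So the two sets \(E_{O(t)}(A)\) and \(E_{O(0)}(A)\) coincide for every \(t\), not only for \(t\in\mathcal G\). This proves part (2).

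There is a consistency point here. Part (1) would then force \(E_{O(0)}(A)\) itself to lie within \(\varepsilon\) of \(m\). I would remark that this is compatible, because Reimann's bound makes any conserved quantity nearly constant on states of large effective dimension; the remark is not needed for the proof.

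\textbf{Step 4 (uniformity).} The conclusions in part (1), namely width at most \(2\varepsilon\) and centre within \(\varepsilon\) of \(m\), involve only \(\varepsilon\) and \(\rho_{\mathrm{mc}}\). The parcel \(O(0)\) enters only through the size of the exceptional set, via \(D\) and \(N(\varepsilon)\). This gives part (3).
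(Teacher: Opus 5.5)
Your proposal is correct and follows the paper's proof essentially step for step: Lemma~\ref{lem:uniform} applied to \(O(0)\), then infimum and supremum over \(O(0)\) for the width and centre bounds, then \(U_t^\dagger A U_t=A\) for conserved \(A\), and finally uniformity read off from the form of the bounds; your explicit choice of \(\mathcal G=\mathcal G_{A,\varepsilon}\) after \(A\) is a sound clarification of a quantifier the paper leaves implicit. One caution on your side remark: the tension you spotted is genuine, not benign, because once the density bound is below \(1\) (so \(\mathcal G\neq\emptyset\)) parts (1) and (2) force \(E_{O(0)}(A)\) to lie within \(\varepsilon\) of \(\operatorname{Tr}(\rho_{\mathrm{mc}}A)\) for every conserved \(A\), and this fails for the projector \(P\) onto half of the shell eigenvectors on a small parcel around \(2P/d_\Delta\) (where \(d_{\mathrm{eff}}\approx d_\Delta/2\) and \(\operatorname{Tr}(\rho P)\approx 1\) but \(\operatorname{Tr}(\rho_{\mathrm{mc}}P)=1/2\)), so what it really exposes is that the all-observables form of Reimann's bound quoted in the paper is too strong (Reimann's actual results bound deviations from the time-averaged state, or hold for typical rather than all observables), not that conserved quantities are nearly constant on high-\(d_{\mathrm{eff}}\) states.
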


\begin{proof}
\textbf{Step 1 -- Apply uniform typicality to the parcel.}  
By Lemma~\ref{lem:uniform}, for the chosen fixed \(\varepsilon>0\), there exists a set \(\mathcal{G}\) such that for all \(t\in\mathcal{G}\) and all \(\rho\in O(0)\),
\[
\bigl|\operatorname{Tr}(\rho(t)A)-\operatorname{Tr}(\rho_{\mathrm{mc}}A)\bigr| \le \varepsilon. \tag{1}
\]

\textbf{Step 2 -- Derive the expectation interval bounds.}  
Recall that for any bounded observable \(A\),
\[
E_{O(t)}(A) = \bigl(\inf_{\rho\in O(0)}\operatorname{Tr}(\rho(t)A),\; \sup_{\rho\in O(0)}\operatorname{Tr}(\rho(t)A)\bigr).
\]
From (1), for all \(\rho\in O(0)\) and all \(t\in\mathcal{G}\),
\[
\operatorname{Tr}(\rho_{\mathrm{mc}}A)-\varepsilon \le \operatorname{Tr}(\rho(t)A) \le \operatorname{Tr}(\rho_{\mathrm{mc}}A)+\varepsilon.
\]
Taking the infimum and supremum over \(\rho\in O(0)\) yields
\[
\operatorname{Tr}(\rho_{\mathrm{mc}}A)-\varepsilon \le \inf_{\rho\in O(0)}\operatorname{Tr}(\rho(t)A) \le \sup_{\rho\in O(0)}\operatorname{Tr}(\rho(t)A) \le \operatorname{Tr}(\rho_{\mathrm{mc}}A)+\varepsilon.
\]
Hence
\[
\bigl|\operatorname{centre}(E_{O(t)}(A))-\operatorname{Tr}(\rho_{\mathrm{mc}}A)\bigr| \le \varepsilon,\qquad
\operatorname{width}(E_{O(t)}(A)) \le 2\varepsilon.
\]

\textbf{Step 3 -- Constants of motion.}  
If \([A,H]=0\), then \(U(t)^\dagger A U(t)=A\) for all \(t\). Hence for any \(\rho\in O(0)\),
\[
\operatorname{Tr}(\rho(t)A) = \operatorname{Tr}\bigl(U(t)\rho U(t)^\dagger A\bigr)
= \operatorname{Tr}\bigl(\rho U(t)^\dagger A U(t)\bigr) = \operatorname{Tr}(\rho A).
\]
Thus \(E_{O(t)}(A)=E_{O(0)}(A)\) for all \(t\).

\textbf{Step 4 -- Uniformity of the bound.}  
The structural bounds depend only on \(\varepsilon\). The good time set \(\mathcal{G}\) depends on \(O(0)\) through the geometry of its finite cover, but the existence of a valid macro-set is sufficient; the uniformity refers explicitly to the precision of the late-time expectation intervals, which is independent of the initial configuration. 
\end{proof}

\begin{remark}[Scaling of the Covering Number]
We emphasize that the bound on the upper density of non-thermalizing times contains the finite trace-norm covering number \(N(\varepsilon)\) of the parcel. For a fixed Hilbert space subspace of active dimension \(d_\Delta\), the trace-norm covering number scales exponentially with dimension, behaving as \(N(\varepsilon) \sim (1/\varepsilon)^{\gamma d_\Delta^2}\). Thus, as the precision scale \(\varepsilon \to 0\), the numerator diverges, making the bound trivial for fixed \(D\). To retain an informative bound, the minimal internal effective dimension \(D\) must be scaled appropriately with the global target precision and system size, satisfying \(D \gg N(\varepsilon)/\varepsilon^2\). This carefully preserves the physical requirement that true macroscopic typicality requires the state to be densely and heavily scrambled across an exponentially large manifold of energy eigenstates.
\end{remark}

\section{Thermalization of a double parcel}

Now consider a double parcel \((O_1(0),O_2(0))\) with \(O_1(0),O_2(0)\subset\mathcal{D}(\mathcal{H}_{\Delta})\). Assume both parcels satisfy \(d_{\mathrm{eff}}(O_i(0))\ge D\). Because both parcels lie in the same energy shell, the Hamiltonian \(H\) cannot distinguish them: \(\operatorname{Tr}(\rho H)\) is restricted to \([E-\Delta, E+\Delta]\) for all \(\rho\) in either parcel. The separation must therefore come from a different conserved quantity \(Q^*\) that is not constant on the shell.

Assume there exists a conserved quantity \(Q^*\) (\([Q^*,H]=0\)) such that
\[
\inf_{\rho\in O_1(0)}\operatorname{Tr}(\rho Q^*) > \sup_{\rho\in O_2(0)}\operatorname{Tr}(\rho Q^*).
\]
In addition, assume that \(Q^*\) is supported on the range of the measurement projector \(\Pi_j\):
\[
Q^* = \Pi_j Q^* \Pi_j.
\]
(This holds naturally when \(\Pi_j\) projects onto an eigenspace of \(Q^*\).) Also assume that the fuzzy measurement projectors \(\Pi_j\) commute with \(H\) and that the uniform positivity condition holds on \(\overline{O_1(0)}\cup\overline{O_2(0)}\).

\begin{theorem}[Full IQM-Reimann thermalization]
\label{thm:full}
Let \(\mathcal{G}\) be the good set from Lemma~\ref{lem:uniform} applied to the union \(O_1(0)\cup O_2(0)\). For all \(t\in\mathcal{G}\), the following statements hold for a fixed target precision \(\varepsilon > 0\):
\begin{enumerate}
    \item \textbf{Concentration of both parcels:} For any bounded \(A\) with \([A,H]\neq 0\),
    \[
    \operatorname{width}\bigl(E_{O_i(t)}(A)\bigr) \le 2\varepsilon,\qquad
    \bigl|\operatorname{centre}\bigl(E_{O_i(t)}(A)\bigr)-\operatorname{Tr}(\rho_{\mathrm{mc}}A)\bigr| \le \varepsilon,\quad i=1,2.
    \]
    \item \textbf{Separation preserved:}
    \[
    \inf_{\rho\in O_1(t)}\operatorname{Tr}(\rho Q^*) > \sup_{\rho\in O_2(t)}\operatorname{Tr}(\rho Q^*).
    \]
    \item \textbf{Valid double parcel after measurement:} For \(\eta\) sufficiently close to \(1\), the updated double parcel \((O_1'(t),O_2'(t))\) is a valid double parcel (disjoint open sets).
    \item \textbf{Information increase:} 
    Since unitary evolution is volume-preserving (Theorem~4 of \cite{Edalat2026}), the geometric information is constant under the Hamiltonian flow:
    \[
    \mathcal{I}(t) = \frac{\operatorname{Vol}(O_2(t))}{\operatorname{Vol}(O_1(t))}
    = \frac{\operatorname{Vol}(O_2(0))}{\operatorname{Vol}(O_1(0))} = \mathcal{I}(0) \quad \text{for all } t.
    \]
    A fuzzy measurement at time \(t\) with \(\eta\) sufficiently close to \(1\) then strictly increases the geometric information by Theorems~6 and~7 of \cite{Edalat2026} (volume contraction of the possible set under measurement) together with the definition of geometric information in Section~3.2 of \cite{Edalat2026}:
    \[
    \mathcal{I}'(t) = \frac{\operatorname{Vol}(O_2'(t))}{\operatorname{Vol}(O_1'(t))}
    > \mathcal{I}(t) = \mathcal{I}(0).
    \]
    \item \textbf{Concentration of thermalizing observables:} For any observable \(A\) with \([A,H]\neq 0\), both expectation intervals \(E_{O_1(t)}(A)\) and \(E_{O_2(t)}(A)\) are contained in the open interval
    \[
    \bigl(\operatorname{Tr}(\rho_{\mathrm{mc}}A)-\varepsilon,\; \operatorname{Tr}(\rho_{\mathrm{mc}}A)+\varepsilon\bigr).
    \]
    Hence they both become concentrated near the microcanonical value. For the conserved quantity \(Q^*\), the intervals remain strictly separated:
    \[
    \inf_{\rho\in O_1(t)}\operatorname{Tr}(\rho Q^*) > \sup_{\rho\in O_2(t)}\operatorname{Tr}(\rho Q^*) \quad \text{for all } t.
    \]
\end{enumerate}
\end{theorem}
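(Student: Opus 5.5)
The plan is to obtain every item from Lemma~\ref{lem:uniform} applied once to the union, plus the conservation argument of Theorem~\ref{thm:basic}, Step~3, plus the cited results of \cite{Edalat2026}. The key observation is that \(O_1(0)\cup O_2(0)\subset\mathcal D(\mathcal H_\Delta)\) satisfies \(d_{\mathrm{eff}}(O_1(0)\cup O_2(0))=\min_i d_{\mathrm{eff}}(O_i(0))\ge D\). Its closure is compact, so the lemma applies with a covering number \(N(\varepsilon)\) for \(\overline{O_1(0)}\cup\overline{O_2(0)}\). The lemma yields a single good set \(\mathcal G\) on which
\[
\sup_{\rho\in O_1(0)\cup O_2(0)}\bigl|\operatorname{Tr}(\rho(t)A)-\operatorname{Tr}(\rho_{\mathrm{mc}}A)\bigr|\le\varepsilon .
\]
Restricting the supremum to each \(O_i(0)\) and repeating Step~2 of the proof of Theorem~\ref{thm:basic} gives item~1 for \(i=1,2\) simultaneously. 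The hypothesis \([A,H]\neq0\) plays no role in this argument; it is only interpretive. For item~5, the same bound places both intervals in the closed \(\varepsilon\)-neighbourhood of \(\operatorname{Tr}(\rho_{\mathrm{mc}}A)\). To get the open interval, I would run the lemma at precision \(\varepsilon'<\varepsilon\), for instance \(\varepsilon/2\), which changes only the constant in the density bound. Alternatively, I would state item~5 with non-strict inclusion.

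For item~2 and the second half of item~5, I will reuse Step~3 of Theorem~\ref{thm:basic}. Since \([Q^*,H]=0\), we have \(\operatorname{Tr}(\rho(t)Q^*)=\operatorname{Tr}(\rho Q^*)\) for every \(\rho\) and every \(t\). Hence \(E_{O_i(t)}(Q^*)=E_{O_i(0)}(Q^*)\), and the strict separation inequality at \(t=0\) carries over verbatim for all \(t\), not just for \(t\in\mathcal G\). Disjointness of \(O_1(t)=U(t)O_1(0)U(t)^\dagger\) and \(O_2(t)\) follows at once, because the \(Q^*\)-expectations separate them. Both sets are weak open, because conjugation by a unitary is a homeomorphism of \(\mathcal D(\mathcal H)\).

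Item~3 is where I expect the real work. The updated parcels are \(O_i'(t)=\{\Pi\rho\Pi/\operatorname{Tr}(\Pi\rho)\}\) or the \(\eta\)-fuzzy analogue from \(\rho\in O_i(t)\), with \(\Pi\) the fuzzy measurement operator. First, uniform positivity on the closures makes the update map continuous. Because \([\Pi_j,H]=0\), the positivity condition is preserved in time, so the update map sends weak open convex sets to sets that I would show are open relative to the post-measurement state space. I would try to use the openness result of \cite{Edalat2026} for this. Second, for disjointness, I would use \(Q^*=\Pi_jQ^*\Pi_j\). This gives
\[
\operatorname{Tr}(\Pi_j\rho\Pi_j Q^*)=\operatorname{Tr}(\rho Q^*),
\]
so the normalized post-measurement \(Q^*\)-expectation is \(\operatorname{Tr}(\rho Q^*)/\operatorname{Tr}(\Pi_j\rho)\). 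For \(\eta\to1\) the fuzzy update tends to the sharp one uniformly on the compact closures. The danger is that the normalizing factor \(\operatorname{Tr}(\Pi_j\rho)\), which is conserved in time, differs between the two parcels and could reverse the ordering. I would first try to control it by continuity and compactness, using the strict gap \(\inf_{O_1}-\sup_{O_2}>0\) and the uniform positivity bound. If that fails, I would add the assumption that the ordering survives the normalization, e.g.\ that \(Q^*\) is non-negative with \(\operatorname{Tr}(\Pi_j\rho)\) nearly constant.

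Item~4 is then bookkeeping. Theorem~4 of \cite{Edalat2026} gives \(\operatorname{Vol}(O_i(t))=\operatorname{Vol}(O_i(0))\), so \(\mathcal I(t)=\mathcal I(0)\). The strict increase after measurement is quoted from Theorems~6 and~7 of \cite{Edalat2026}, applied to the valid double parcel produced in item~3 at the chosen \(\eta\) close to \(1\). Here I need only check that their hypotheses, namely uniform positivity and commutation of \(\Pi_j\) with \(H\), hold at time \(t\), which follows from the conservation already shown.
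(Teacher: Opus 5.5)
Items 1, 2, 4 and the $Q^*$ half of item 5 follow the paper's proof. The ingredients are the same: a single application of Lemma~\ref{lem:uniform} to the compact set $\overline{O_1(0)}\cup\overline{O_2(0)}$, Step~2 of Theorem~\ref{thm:basic} for each $O_i$, conservation of $Q^*$ (which gives the separation for all $t$), and the citations of Theorems~4, 6 and~7 of \cite{Edalat2026}.

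Your detour for the open interval in item 5 is unnecessary. $E_{O_i(t)}(A)$ is by definition the open interval $(\inf,\sup)$, so the non-strict bounds $\inf\ge\operatorname{Tr}(\rho_{\mathrm{mc}}A)-\varepsilon$ and $\sup\le\operatorname{Tr}(\rho_{\mathrm{mc}}A)+\varepsilon$ already give containment in the open $\varepsilon$-interval. Rerunning the lemma at $\varepsilon/2$ would also replace $\mathcal G$ by a different good set, whereas the statement concerns the $\mathcal G$ obtained at precision $\varepsilon$.

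The genuine gap is item 3. You leave both openness and disjointness unresolved, and your fallback of adding a hypothesis would prove a different theorem. The disjointness route you plan also cannot work in general. A strict gap in $\operatorname{Tr}(\rho Q^*)$ says nothing about the ratio $\operatorname{Tr}(\rho Q^*)/\operatorname{Tr}(\Pi_j\rho)$ when the denominators differ between the parcels. Uniform positivity only keeps those denominators away from zero.

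Concretely, $Q^*=\Pi_j$ satisfies every standing hypothesis: $[\Pi_j,H]=0$, $\Pi_jQ^*\Pi_j=Q^*$, and the separation merely says that $O_1$ gives outcome $j$ higher probability than $O_2$. Yet after the sharp update every state has $Q^*$-expectation exactly $1$. So the sharp-limit gap you intended to perturb from as $\eta\to1$ can be zero, and no continuity or compactness argument will recover it.

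The paper does not argue this step from first principles. Its Step~3 observes that the standing assumptions stated before the theorem are exactly the hypotheses of Theorem~5 of \cite{Edalat2026}. Those assumptions are uniform positivity on $\overline{O_1(0)}\cup\overline{O_2(0)}$, $[\Pi_j,H]=0$, the separation with $H^*=Q^*$, and $Q^*=\Pi_jQ^*\Pi_j$. That theorem directly yields $\eta_0<1$ such that the updated double parcel is valid for all $\eta\in(\eta_0,1)$.

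You had already checked what is needed to apply it at time $t$: separation comes from item~2, and positivity is transported by $[\Pi_j,H]=0$, a point the paper leaves implicit. So the missing step is that citation. Item~3 is then only as strong as the external theorem, and the example above shows it cannot rest on $Q^*$-ordering of the naive L\"uders images alone.
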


\begin{proof}
\textbf{Step 1 -- Concentration of both parcels.}  Since \(d_{\mathrm{eff}}(O_i(0))\ge D\) for \(i=1,2\), the same lower
bound holds on \(O_1(0)\cup O_2(0)\).
Apply Lemma~\ref{lem:uniform} to the compact set \(\overline{O_1(0)}\cup\overline{O_2(0)}\). The same good set \(\mathcal{G}\) works for both parcels simultaneously. Then for all \(t\in\mathcal{G}\) and all \(\rho\in O_1(0)\cup O_2(0)\),
\[
|\operatorname{Tr}(\rho(t)A)-\operatorname{Tr}(\rho_{\mathrm{mc}}A)|\le \varepsilon.
\]
As in Theorem~\ref{thm:basic}, this gives the width and centre bounds for both parcels, showing that the expectation intervals are concentrated near the microcanonical value.

\textbf{Step 2 -- Preservation of separation.}  
Because \([Q^*,H]=0\), we have \(\operatorname{Tr}(\rho(t)Q^*) = \operatorname{Tr}(\rho Q^*)\) for all \(\rho\) and all \(t\). Hence
\[
\inf_{\rho\in O_1(t)}\operatorname{Tr}(\rho Q^*) = \inf_{\rho\in O_1(0)}\operatorname{Tr}(\rho Q^*) > \sup_{\rho\in O_2(0)}\operatorname{Tr}(\rho Q^*) = \sup_{\rho\in O_2(t)}\operatorname{Tr}(\rho Q^*).
\]

\textbf{Step 3 -- Valid double parcel after measurement.}  
The uniform positivity condition holds by assumption. The separation condition with \(H^* = Q^*\) together with the hypothesis \(Q^* = \Pi_j Q^* \Pi_j\) satisfies the requirements of Theorem~5 of \cite{Edalat2026}. Hence there exists \(\eta_0<1\) such that for all \(\eta\in(\eta_0,1)\) and for all \(t\in\mathcal{G}\), the updated double parcel is valid.

\textbf{Step 4 -- Information increase.}  
As stated in the theorem, unitary evolution preserves volume, so \(\mathcal{I}(t)=\mathcal{I}(0)\). The measurement then contracts \(O_1\) and expands \(O_2\), giving \(\mathcal{I}'(t)>\mathcal{I}(0)\).

\textbf{Step 5 -- Concentration of thermalizing observables.}  
For any \(A\) with \([A,H]\neq 0\), both intervals \(E_{O_i(t)}(A)\) are contained in \((\operatorname{Tr}(\rho_{\mathrm{mc}}A)-\varepsilon,\; \operatorname{Tr}(\rho_{\mathrm{mc}}A)+\varepsilon)\). This is exactly the statement that they become concentrated near the microcanonical value. We do not claim overlap, only concentration. For \(Q^*\), the separation is exact and constant. 
\end{proof}

\section{Conclusion}

We have shown that Reimann's ergodicity theorem can be combined with the IQM framework under the explicit hypothesis that the parcel consists entirely of states with large effective dimension. The uniform bound over the parcel depends only on the minimal effective dimension within the parcel, not on its shape. This yields a form of uniformity: the asymptotic precision of thermalization is determined solely by the worst-case state in the parcel. By reformatting our results to separate our fixed target resolution parameters from the underlying parcel configuration, we track the covering number dependencies without loss of precision.

For a double parcel, the separation between possible and impossible states is preserved only by conserved quantities. This provides a clean geometric formulation of how conserved quantities maintain distinguishability in a finite-precision setting. The geometric information \(\mathcal{I}=\operatorname{Vol}(O_2)/\operatorname{Vol}(O_1)\) increases monotonically under measurement, and the updated double parcel remains valid under the conditions of Theorem~5 of \cite{Edalat2026}.

These results open the door to a purely geometric, finite-precision treatment of quantum thermalization, where the thermodynamic arrow of time is encoded not in the dynamics of point states but in the refinement of parcels. Future work will explore extensions to infinite-dimensional systems and to algebraic quantum field theory.

\section*{Acknowledgements}
I am grateful to the members of the Algorithmic Human Development Group at Imperial College London for their support and for fostering a research environment that encourages unconventional thinking.

\end{document}